\colorlet{darkgreen}{green!40!black}
\colorlet{darkblue}{blue!60!black}
\colorlet{darkred}{red!50!black}
\colorlet{safecellcolor}{yellow!5}
\colorlet{goodcellcolor}{green!10}
\colorlet{badcellcolor}{blue!10}
\tikzset{
  >=latex,node distance=2cm,on grid,auto, initial text=,
  box state/.style={draw,rectangle,minimum size=8mm,rounded corners},
  prob state/.style={draw,very thick,shape=circle,darkblue,minimum size=3mm,inner sep=0mm},
  every loop/.style={shorten >=0pt},
  accepting state/.style={double distance=1.2pt, outer sep = 0.6pt+\pgflinewidth},
  accepting dot/.style={above=-2.5pt,circle,fill,darkgreen,inner sep=2pt,radius=1pt},
  loop above/.append style={every loop/.append style={out=120, in=60, looseness=6}},
  loop below/.append style={every loop/.append style={out=300, in=240, looseness=6}},
  loop left/.append style={every loop/.append style={out=210, in=150, looseness=6}},
  loop right/.append style={every loop/.append style={out=30, in=330, looseness=6}}
}
\theoremstyle{definition} \newtheorem{definition}{Definition}}
\newtheorem{theorem}{Theorem}
\newtheorem{lemma}{Lemma}
\theoremstyle{remark}
  \newtheorem{exmpl}{Example}
\newenvironment{example}{\begin{exmpl}}{\qed\end{exmpl}}
\newcommand{\toolname}{\textsc{Mungojerrie} }
\definecolor{dkgreen}{rgb}{0,0.6,0}
\definecolor{gray}{rgb}{0.5,0.5,0.5}
\definecolor{mauve}{rgb}{0.58,0,0.82}
\newcommand{\pto}{\xrightharpoondown{}}
\newcommand{\set}[1]{\left\{ #1 \right\}}
\newcommand{\seq}[1]{\langle #1 \rangle}
\newcommand{\FRuns}{\mathit{Runs}_{\mathit{f}}}
\newcommand{\Runs}{\mathit{Runs}}
\newcommand{\EDisct}{\mathit{EDisct}}
\newcommand{\EAvg}{\mathit{EAvg}}
\newcommand{\ECost}{\mathit{Cost}}
\DeclareMathOperator{\PReach}{\mathit{PReach}}
\DeclareMathOperator{\PSat}{\mathit{PSat}}
\newcommand{\Aa}{\mathcal{A}}
\newcommand{\Mm}{\mathcal{M}}
\newcommand{\Gg}{\mathcal{G}}
\newcommand{\Ff}{\mathcal{F}}
\newcommand{\eE}{\mathbb E}
\newcommand{\Real}{\mathbb R}
\newcommand{\Rplus}{\mathbb R_{\geq 0}}
\def\rmdef{\stackrel{\mbox{\rm {\tiny def}}}{=}} 
\newcommand{\DIST}{{\cal D}}
\newcommand{\llim}[2]{\lim_{#1 \uparrow #2}}
\DeclareMathOperator{\supp}{\mathit{supp}}
\DeclareMathOperator{\last}{\mathit{last}}
\DeclareMathOperator{\infi}{\mathsf{inf}}
\DeclareMathOperator{\acc}{\mathsf{Acc}}
 \title{Omega-Regular Objectives in Model-Free
  Reinforcement Learning} \author{Ernst Moritz Hahn\thanks{University
    of Liverpool, UK.} \and Mateo Perez\thanks{University of Colorado
    Boulder, USA.} \and Sven Schewe$^{*}$ \and \and Fabio Somenzi$^{\dag}$
  \and Ashutosh Trivedi$^{\dag}$ \and Dominik Wojtczak$^{*}$}
\begin{document}
\maketitle
\begin{abstract}
  We provide the first solution for model-free reinforcement learning
  of $\omega$-regular objectives for Markov decision processes (MDPs).
  We present a constructive reduction from the almost-sure
  satisfaction of $\omega$-regular objectives to an almost-sure
  reachability problem, and extend this technique to learning how to
  control an unknown model so that the chance of satisfying the
  objective is maximized.  A key feature of our technique is the
  compilation of $\omega$-regular properties into limit-deterministic
  B\"uchi automata instead of the traditional Rabin automata; this
  choice sidesteps difficulties that have marred previous proposals.
  Our approach allows us to apply model-free, off-the-shelf reinforcement
  learning algorithms to compute optimal strategies from the
  observations of the MDP.  We present an experimental evaluation of
  our technique on benchmark learning problems.
\end{abstract}

\section{Introduction} 
\label{sec:introduction}

Reinforcement learning (RL) \cite{Sutton18} is an approach to
sequential decision making in which agents rely on reward signals to
choose actions aimed at achieving prescribed objectives.  Some
objectives, like running a maze, are naturally expressed in terms of
scalar rewards; in other cases the translation is less obvious.  In
this paper we solve the problem of \emph{$\omega$-regular rewards},
that is, the problem of defining scalar rewards for the transitions of
a Markov decision process (MDP) so that strategies to maximize the
probability to satisfy an $\omega$-regular objective may be computed
by off-the-shelf, \emph{model-free} (a.k.a.\ \emph{direct}) RL
algorithms.

Omega-regular languages provide a rich formalism to unambiguously
express qualitative safety and progress requirements of MDPs
\cite{Baier08}.  The most common way to describe an $\omega$-regular
language is via a formula in Linear Time Logic (LTL); other
specification mechanisms include extensions of LTL, various types of
automata, and monadic second-order logic.  A typical requirement that
is naturally expressed as an $\omega$-regular objective prescribes
that the agent should eventually control the MDP to stay within a
given set of states, while at all times avoiding another set of
states.  In LTL this would be written
$(\eventually \always \mathtt{goal}) \wedge (\always \neg
\mathtt{trap})$, where $\mathtt{goal}$ and $\mathtt{trap}$ are labels
attached to the appropriate states, $\eventually$ stands for
``finally,'' and $\always$ stands for ``globally.''

For verification or synthesis, an $\omega$-regular objective is
usually translated into an automaton that monitors the traces of
execution of the MDP \cite{deAlfa98}.  Successful executions cause the automaton to
take certain (accepting) transitions infinitely often, and ultimately
avoid certain (rejecting) transitions.  That is, $\omega$-regular
objectives are about the long-term behavior of an MDP; the frequency
of reward collected is not what matters.  A policy that guarantees no
rejecting transitions and an accepting transition every ten steps, is
better than a policy that promises an accepting transition at each
step, but with probability $0.5$ does not accept at all.

The problem of $\omega$-regular rewards in the context of model-free
RL was first tackled in \cite{Sadigh14} by translating the objective
into a deterministic Rabin automaton and deriving positive and
negative rewards directly from the acceptance condition of the
automaton.  In Section~\ref{sec:overview} we show that their
algorithm, and the extension of \cite{Hiromo15}
may fail to find optimal strategies, and may underestimate the
probability of satisfaction of the objective.


We avoid the problems inherent in the use of deterministic Rabin
automata for model-free RL by resorting to \emph{limit-deterministic}
B\"uchi automata, which, under mild restrictions, were shown by
\cite{Hahn15} to be suitable for both qualitative and quantitative
analysis of MDPs under all $\omega$-regular objectives.  The B\"uchi
acceptance condition, which, unlike the Rabin condition, does not
involve rejecting transitions, allows us to constructively reduce the
almost-sure satisfaction of $\omega$-regular objectives to an
almost-sure reachability problem.  In addition, it is also suitable
for quantitative analysis: the value of a state approximates the
maximum probability of satisfaction of the objective from that state
as a probability parameter approaches $1$.




In this paper we concentrate on model-free approaches and infinitary
behaviors for finite MDPs.  Related problems include model-based RL
\cite{Fu14}, RL for finite-horizon objectives \cite{Li17}, and learning
for efficient verification \cite{Brazdi14}.

This paper is organized as follows.  In Section~\ref{sec:problem} we
introduce definitions and notations.  Section~\ref{sec:overview}
motivates our approach by showing the problems that arise when the
reward of the RL algorithm is derived from the acceptance condition of
a deterministic Rabin automaton.  In Section~\ref{sec:reinforcement}
we prove the results on which our approach is based.  Finally,
Section~\ref{sec:experiment} discusses our experiments.


\section{Preliminaries}
\label{sec:problem}
An $\omega$-\emph{word} $w$ on an alphabet $\Sigma$ is a function
$w : \mathbb{N} \to \Sigma$.  We abbreviate $w(i)$ by $w_i$.  The set
of $\omega$-words on $\Sigma$ is written $\Sigma^\omega$ and a subset
of $\Sigma^\omega$ is an $\omega$-\emph{language} on $\Sigma$.

A \emph{probability distribution} over a finite set $S$ is a function
$d : S {\to} [0, 1]$ such that $\sum_{s \in S} d(s) = 1$. 
Let $\DIST(S)$ denote the set of all discrete distributions over $S$.
We say a distribution ${d \in \DIST(S)}$ is a \emph{point distribution}
if $d(s) {=} 1$ for some $s \in S$.
For a distribution $d \in \DIST(S)$ we write $\supp(d) \rmdef \set{s
  \in S : d(s) > 0}$.

\subsection{Markov Decision Processes}
A {\it Markov decision process} $\Mm$ is a tuple $(S, A, T, AP, L)$ where
$S$ is a finite set of states, $A$ is a finite set of {\it actions}, $T: S \times A \pto
\DIST(S)$ is the {probabilistic transition (partial) function}, $AP$ is the set of {\it
  atomic propositions}, and $L: S \to 2^{AP}$ is the {\it proposition labeling
  function}.

For any state $s \in S$, we let $A(s)$ denote the set of actions available in
the state $s$.  For states $s, s' \in S$ and $a \in A(s)$ as $T(s,
a)(s')$ equals $p(s' | s, a)$.
A {\it run} of $\Mm$ is an $\omega$-word $\seq{s_0, a_1, s_1, \ldots} \in S
\times (A \times S)^\omega$ such that $p(s_{i+1} | s_{i}, a_{i+1}) {>} 0$ for all $i
\geq 0$.
A finite run is a finite such sequence. 
For a {\it run} $r = \seq{s_0, a_1, s_1, \ldots, s_n}$ we define corresponding
labeled run as $L(r) = \set{L(s_0), L(s_1), \ldots} \in (2^{AP})^\omega$.
We write $\Runs^\Mm (\FRuns^\Mm)$  for the set of runs (finite runs) of $\Mm$ 
and $\Runs{}^\Mm(s) (\FRuns{}^\Mm(s))$  for the set of runs (finite runs) of
$\Mm$ starting from state $s$.
For a finite run $r$ we write $\last(r)$ for the last state of the sequence.

A strategy in $\Mm$ is a function $\sigma : \FRuns \to \DIST(A)$ such that
$\supp(\sigma(r)) \subseteq A(\last(r))$.
Let $\Runs^\Mm_\sigma(s)$ denote the subset of runs $\Runs^\Mm_s$ which
correspond to strategy $\sigma$ with the initial state $s$.
Let $\Sigma_\Mm$ be the set of all strategies.
We say that a strategy $\sigma$ is {\it pure} if $\sigma(r)$ is a point
distribution for  all runs $r \in \FRuns^\Mm$ and we say that $\sigma$ is {\it
  stationary} if $\last(r) = \last(r')$ implies $\sigma(r) = \sigma(r')$ for
all runs $r, r' \in \FRuns^\Mm$.  A strategy that is not pure is \emph{mixed}.
A stationary strategy is {\it positional} if it is both pure and stationary.

The behavior of an MDP $\Mm$ under a strategy $\sigma$ is defined on
a probability space
$(\Runs^\Mm_\sigma(s), \Ff_{\Runs^\Mm_\sigma(s)}, \Pr^\sigma_s)$ over
the set of infinite runs of $\sigma$ with starting state $s$.  Given a
real-valued random variable over the set of infinite runs
$f{:}\Runs^\Mm \to \Real$, we denote by $\eE^\sigma_s \set{f}$ the
expectation of $f$ over the runs of $\Mm$ originating at $s$ that
follow strategy $\sigma$.

A \emph{rewardful} MDP is a pair $(\Mm, \rho)$, where $\Mm$ is an MDP
and $\rho: S \times A \to \Real$ is a reward function assigning utility to state-action pairs.
A rewardful MDP $(\Mm, \rho)$ under a strategy $\sigma$ determines a sequence
of random rewards ${\rho(X_{i-1}, Y_i)}_{i \geq 1}$.
Depending upon the problem of interest, there are a number of performance
objectives proposed in the literature:
\begin{enumerate}
\item
  {\it Reachability Probability} $\PReach^T(s, \sigma)$ (with $T \subseteq S$):
  \begin{equation*}
    \Pr{}_s^\sigma \{\seq{s,a_1,s_1,\ldots} \in \Runs_\sigma^\Mm(s) :
    \exists i \scope s_i \in T \}
  \end{equation*}
\item
  {\it Discounted Reward} (with $\lambda \in [0, 1[$\,):
  \[
  \EDisct^\lambda(s, \sigma) =
  \lim_{N \to \infty}  \eE^\sigma_s\Big\{\sum_{1 \leq i \leq N} \lambda^{i-1} \rho(X_{i-1}, Y_i)\Big\}.
  \]
\item
  {\it Average Reward}
  \[
  \EAvg(s, \sigma) =
  \limsup_{N \to \infty}  \frac{1}{N}\eE^\sigma_s\Big\{\sum_{1\leq i
      \leq N} \rho(X_{i-1}, Y_i)\Big\}.
  \]
\end{enumerate}
For an objective $\ECost {\in} \{\PReach^T, \EDisct^\lambda, \EAvg\}$ and an
initial state $s$ we define the optimal cost
$\ECost_*(s)$ as $\sup_{\sigma \in \Sigma_\Mm} \ECost(s, \sigma)$.
A strategy $\sigma$ of $\Mm$ is optimal for the objective $\ECost$ if $\ECost(s,
\sigma) = \ECost_*(s)$ for all $s \in S$.
For an MDP the optimal cost and optimal strategies can be computed in polynomial
time~\cite{Put94}. 


\subsection{$\omega$-Regular Performance Objectives}
A \emph{nondeterministic $\omega$-regular automaton} is a tuple
${\cal A} = (\Sigma,Q,q_0,\delta,\acc)$, where $\Sigma$ is a
finite \emph{alphabet}, $Q$ is a finite set of \emph{states},
$q_0 \in Q$ is the \emph{initial state},
$\delta : Q \times \Sigma \to 2^Q$ is the \emph{transition function},
and $\acc$ is the \emph{acceptance condition}, to be discussed
presently.  A \emph{run} $r$ of ${\cal A}$ on $w \in \Sigma^\omega$ is
an $\omega$-word $r_0, w_0, r_1, w_1, \ldots$ in
$(Q \cup \Sigma)^\omega$ such that $r_0 = q_0$ and, for $i > 0$,
$r_i \in \delta(r_{i-1},w_{i-1})$.  Each triple
$(r_{i-1},w_{i-1},r_i)$ is a \emph{transition} of ${\cal A}$.

We consider two types of acceptance conditions---B\"uchi and Rabin---that depend
on the transitions that occur infinitely often in a run of an automaton.
A \emph{B\"uchi} (Rabin) automaton is an $\omega$-automaton equipped with a
B\"uchi (Rabin) acceptance condition. 


We write $\infi(r)$ for the set of transitions that appear infinitely
often in the run $r$.
The \emph{B\"uchi} acceptance condition
defined by $F \subseteq Q \times \Sigma \times Q$ is the set of runs
$\{ r \in (Q \cup \Sigma)^\omega : \infi(r) \cap F \neq \emptyset\}$.
A \emph{Rabin} acceptance condition is defined in terms of $k$ pairs
of subsets of $Q \times \Sigma \times Q$,
$(B_0,G_0), \ldots, (B_{k-1},G_{k-1})$, as the set
$\{ r \in (Q \cup \Sigma)^\omega : \exists i < k \scope \infi(r)
\cap B_i = \emptyset \wedge \infi(r) \cap G_i \neq \emptyset \}$.  The
\emph{index} of a Rabin condition is its number of pairs.


A run $r$ of ${\cal A}$ is \emph{accepting} if $r \in \acc$.  The
\emph{language} of ${\cal A}$ (or, \emph{accepted} by ${\cal A}$) is
the subset of words in $\Sigma^\omega$ that have accepting runs in
${\cal A}$.  A language is $\omega$-\emph{regular} if it is accepted
by an $\omega$-regular automaton.

Given an MDP $\Mm$ and an $\omega$-regular objective $\varphi$ given as
an $\omega$-regular automaton $\Aa_\varphi = (\Sigma,Q,q_0,\delta,\acc)$,
we are interested in computing an optimal strategy satisfying the
objective.  We define the satisfaction probability of a strategy
$\sigma$ from initial state $s$ as:
\begin{equation*}
\PSat^\varphi(s, \sigma) =   \Pr{}_s^\sigma \{ r \in \Runs_\sigma^\Mm(s) :
  L(r) \in \acc \}
\end{equation*}
The optimal probability of satisfaction $\PSat_*$ and corresponding optimal
strategy is defined in a manner analogous to other performance objectives.

\subsection{Deterministic Rabin and B\"uchi Automata}

An automaton ${\cal A} = (\Sigma,Q,q_0,\delta,\acc)$ is
\emph{deterministic} if $|\delta(q,\sigma)| \leq 1$ for all $q \in Q$
and all $\sigma \in \Sigma$.  ${\cal A}$ is \emph{complete} if
$|\delta(q,\sigma)| \geq 1$.  A word in $\Sigma^\omega$ has exactly
one run in a deterministic, complete automaton.
We use common three-letter abbreviations to distinguish types of automata.  The
first (D or N) tells whether the automaton is deterministic; the second
denotes the acceptance condition (B for B\"uchi and R for Rabin).
The third letter refers to the type of objects read by the automaton (here, we
only use W for $\omega$-words). For example,
an NBW is a nondeterministic B\"uchi automaton, and a DRW is a
deterministic Rabin automaton.


Every $\omega$-regular language is accepted by some DRW and by some
NBW.  In contrast, there are $\omega$-regular languages that are not
accepted by any DBW.  The \emph{Rabin index} of a Rabin automaton is
the index of its acceptance condition.  The Rabin index of an
$\omega$-regular language $\mathcal{L}$ is the minimum index among
those of the DRWs that accept $\mathcal{L}$.  For each
$n \in \mathbb{N}$ there exist $\omega$-regular languages of Rabin
index $n$.  The languages accepted by DBWs, however, form a
proper subset of the languages of index $1$.

Given an MDP $\Mm = (S, A, T, AP, L)$ with a designated initial state
$s_0 \in S$, and a deterministic $\omega$-automaton
$\mathcal{A} = (2^{AP}, Q, q_0, \delta, \acc)$, the \emph{product}
$\Mm \times \mathcal{A}$ is the tuple
$(S \times Q, (s_0,q_0), A, T^\times, \acc^\times)$.  The
probabilistic transition function
$T^\times : (S \times Q) \times A \pto \DIST(S \times Q)$ is defined
by
\begin{equation*}
  T^\times((s,q),a)((\hat{s},\hat{q})) =
  \begin{cases}
    T(s,a)(\hat{s}) & \text{if } \{\hat{q}\} = \delta(q,L(s)) \\
    0 & \text{otherwise.}
  \end{cases}
\end{equation*}
If $\mathcal{A}$ is a DBW, $\acc$ is defined by
$F \subseteq Q \times 2^{AP} \times Q$; then
$F^\times \subseteq (S \times Q) \times A \times (S \times Q)$ defines
$\acc^\times$ as follows: $((s,q),a,(s',q')) \in F^\times$ if and only
if $(q,L(s),q') \in F$ and $T(s,a)(s') \neq 0$.  If $\mathcal{A}$ is a
DRW of index $k$,
$\acc^\times = \{(B^\times _0,G^\times_0), \ldots,
(B^\times_{k-1},G^\times_{k-1})\}$.  To set $B_i$ of $\acc$, there
corresponds $B^\times_i$ of $\acc^\times$ such that
$((s,q),a,(s',q')) \in B^\times_i$ if and only if
$(q,L(s),q') \in B_i$ and $T(s,a)(s') \neq 0$.  Likewise for
$G^\times_i$.

Given an MDP $\Mm = (S, A, T, AP, L)$, we define its underlying graph as the
directed graph $\Gg_\Mm = (V, E)$ where $V = S$ and $E \subseteq S \times S$ is
such that $(s, s') \in E$ if $T(s, a)(s') > 0$ for some $a\in A(s)$.
A sub-MDP of $\Mm$ is an MDP $\Mm' = (S', A', T', AP, L')$ where $S' \subset S$,
$A' \subseteq A$ such that $A'(s) \subseteq A(s)$ for every $s \in S'$, and $T'$
and $L'$ are analogous to $T$ and $L$ when restricted to $S'$ and $A'$.
In particular, $\Mm'$ is closed under probabilistic transitions, i.e. for all $s
\in S'$ and $a \in A'$ we have that $T(s, a)(s') > 0$ implies that $s' \in S'$.
An {\it end-component}~\cite{deAlfa98} of an MDP $\Mm$ is a sub-MDP $\Mm'$ of
$\Mm$ such that $\Gg_{\Mm'}$ is strongly connected.
A maximal end-component is an end-component that is maximal under
set-inclusion.
Every state $s$ of an MDP $\Mm$ belongs to at most one maximal end-component
of $\Mm$.

\begin{theorem}[End-Component Properties~\cite{deAlfa98}]
  \label{th:end-comp}
  Once an end-component $C$ of an MDP is entered, there is a strategy
  that visits every state-action combination in $C$ with probability
  $1$ and stays in $C$ forever.  Moreover, for every strategy an
  end-component is visited with probability $1$.
\end{theorem}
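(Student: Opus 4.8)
The plan is to prove the two assertions separately, both via a Borel--Cantelli argument adapted to the dependent setting (L\'evy's conditional extension of the second Borel--Cantelli lemma). Throughout, let $p_{\min}>0$ be the smallest positive transition probability occurring in $\Mm$, which exists because $S$ and $A$ are finite.

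\emph{Existence of a good strategy inside $C$.} Fix an end-component $C$ with state set $S'$ and actions $A'$. Since $C$ is closed under probabilistic transitions, any strategy that from $s\in S'$ only ever picks actions in $A'(s)$ keeps the run in $S'$ forever; so the only real content is to visit every pair $(s,a)$ with $s\in S'$, $a\in A'(s)$ infinitely often. Enumerate these finitely many pairs as $(s_1,a_1),\dots,(s_m,a_m)$ and define a finite-memory strategy that operates in rounds: in round $k$ it targets the pair $(s_j,a_j)$ with $j\equiv k\pmod m$; because $\Gg_C$ is strongly connected and $S'$ is finite, from the current state there is a path inside $C$ to $s_j$ of length at most $|S'|$, and the strategy follows one fixed such path and then plays $a_j$, ending the round. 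Regardless of the past, the conditional probability that a given round succeeds (reaches its target state and plays its target action) is at least $\delta:=p_{\min}^{|S'|}>0$. Since each pair is targeted in infinitely many rounds with conditional success probability bounded below by $\delta$, the conditional Borel--Cantelli lemma yields that almost surely every pair is visited---indeed infinitely often---which proves the first claim.

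\emph{Every run settles into an end-component.} Fix an arbitrary strategy $\sigma$ and initial state $s$. For a run $r$, let $\mathrm{Inf}_S(r)$ be the set of states occurring infinitely often in $r$ and $\mathrm{Inf}_A(r)$ the set of state-action pairs taken infinitely often. First establish closure: if $(t,a)\in\mathrm{Inf}_A(r)$ and $T(t,a)(t')>0$, then $t'\in\mathrm{Inf}_S(r)$ almost surely, since each of the infinitely many times $(t,a)$ is played the run moves to $t'$ with conditional probability at least $p_{\min}$, so $t'$ is reached infinitely often by the conditional Borel--Cantelli lemma; as there are only finitely many triples $(t,a,t')$, the union of these failure events still has probability $0$. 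Next, along any run there is a finite (random) time $N$ after which only states in $\mathrm{Inf}_S(r)$ are visited and only pairs in $\mathrm{Inf}_A(r)$ are used; any two states visited after $N$ are each visited infinitely often, hence mutually reachable along the run using only those actions, so the sub-MDP $C_r$ induced by $\mathrm{Inf}_S(r)$ and $\mathrm{Inf}_A(r)$ has a strongly connected underlying graph. Together with the closure property, $C_r$ is an end-component, and the run is inside it from time $N$ on; hence an end-component is visited (and, in fact, never left) with probability $1$.

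\emph{Main obstacle.} The delicate point in both parts is that the relevant events---``round $k$ succeeds'', ``the move from $(t,a)$ goes to $t'$''---are not independent, so the classical second Borel--Cantelli lemma does not apply directly; one must instead use its conditional version, observing that the conditional probabilities of these events given the history are uniformly bounded below by a positive constant. For the second part there is the additional subtlety that $\mathrm{Inf}_S(r)$ and $\mathrm{Inf}_A(r)$ are themselves random, but since there are only finitely many candidate sub-MDPs, a union bound over finitely many probability-zero non-closure events suffices to conclude.
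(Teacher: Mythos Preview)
The paper does not supply a proof of this theorem at all: it is quoted verbatim from de~Alfaro's thesis and used as a black box.  So there is no ``paper's approach'' to compare against.  Your argument is essentially the standard one (as in de~Alfaro or Baier--Katoen): a round-robin strategy with uniformly positive success probability for the first part, and the identification of the infinitely-often state/action sets as a closed, strongly connected sub-MDP for the second, both driven by the conditional (L\'evy) Borel--Cantelli lemma.  It is correct.

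Two small points worth tightening.  In the first part, ``follows one fixed such path and then plays $a_j$'' is ambiguous about what happens when the random transitions deviate from the intended path before $s_j$ is reached; you should say explicitly that the round has a fixed length (at most $|S'|$ steps of attempted path-following, then one step for $a_j$ if $s_j$ was reached), and that failure simply means moving on to the next round---this is what makes the $\delta$ lower bound and the Borel--Cantelli application go through.  In the second part, you should note explicitly that every $s\in\mathrm{Inf}_S(r)$ has some action in $\mathrm{Inf}_A(r)$ (by pigeonhole, since $s$ is visited infinitely often and $A(s)$ is finite), so that the induced sub-MDP has no action-less states; this is needed for $C_r$ to be a genuine sub-MDP in the paper's sense.
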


End components and runs are defined for products just like for MDPs.
A run of $\Mm \times \mathcal{A}$ is accepting if it satisfies the
product's acceptance condition. An \emph{accepting end component} of
$\Mm \times \mathcal{A}$ is an end component such that every run of
the product MDP that eventually dwells in it is accepting.

In view of Theorem~\ref{th:end-comp}, satisfaction of an
$\omega$-regular objective $\varphi$ by an MDP $\Mm$ can be formulated
in terms of the accepting end components of the product
$\Mm \times \mathcal{A}_\varphi$, where $\mathcal{A}_\varphi$ is an
automaton accepting $\varphi$.  The maximum probability of
satisfaction of $\varphi$ by $\Mm$ is the maximum probability, over
all strategies, that a run of the product
$\Mm \times \mathcal{A}_\varphi$ eventually dwells in one of its
accepting end components.

It is customary to use DRWs instead of DBWs in the construction of the
product, because the latter cannot express all $\omega$-regular
objectives.  On the other hand, general NBWs are not used since
causal strategies cannot optimally resolve nondeterministic choices
because that requires access to future events \cite{Vardi85}.

\subsection{Limit-Deterministic B\"uchi Automata}
\label{sec:ldbw}

In spite of the large gap between DRWs and DBWs in terms of indices,
even a very restricted form of nondeterminism is sufficient to make
DBWs as expressive as DRWs.  A \emph{limit-deterministic} B\"uchi
automaton (LDBW) is an NBW
${\cal A} = (\Sigma, Q_i \cup Q_f,q_0,\delta,F)$ such that
\begin{itemize}
\item $Q_i \cap Q_f = \emptyset$;
\item $F \subseteq Q_f \times \Sigma \times Q_f$;
\item $|\delta(s,\sigma)| \leq 1$ for all $q \in Q_f$ and
  $\sigma \in \Sigma$; 
\item $\delta(q,\sigma) \subseteq Q_f$ for all $q \in Q_f$ and
  $\sigma \in \Sigma$.
\end{itemize}
Broadly speaking, a LDBW behaves deterministically once it has seen an accepting
transition. 


LDBWs are as expressive as general NBWs.  Moreover, NBWs can be
translated into LDBWs that can be used for the qualitative and
quantitative analysis of MDPs
\cite{Vardi85,Courco95,Hahn15,Sicker16b}.  We use the translation from
\cite{Hahn15}, which uses LDBWs that consist of two parts: an initial
deterministic automaton (without accepting transitions) obtained by a
subset construction; and a final part produced by a breakpoint
construction.  They are connected by a single ``guess'', where the
automaton guesses a subset of the reachable states to start the
breakpoint construction.  Like other constructions (e.g.\
\cite{Sicker16b}), one can compose the resulting automata with an MDP,
such that the optimal control of the product defines a control on the
MDP that maximises the probability of obtaining a word from the
language of the automaton.  We refer to LDBWs with this property as
\emph{suitable} limit-deterministic automata (SLDBWs) (cf.\ Theorem
\ref{th:recall} for details).


\subsection{Linear Time Logic Objectives}
LTL (Linear Time Logic) is a temporal logic whose formulae describe a
subset of the $\omega$-regular languages, which is often used to
specify objectives in human-readable form.  Translations exist from
LTL to various forms of automata, including NBW, DRW, and SLDBW.  Given
a set of atomic propositions $AP$, $a$ is an LTL formula for each
$a \in AP$.  Moreover, if $\varphi$ and $\psi$ are LTL formulae, so
are
\begin{equation*}
  \neg \varphi, \varphi \vee \psi, \nextt \varphi, \psi \until \varphi \enspace.
\end{equation*}
Additional operators are defined as abbreviations:
$\top \rmdef a \vee \neg a$; $\bot \rmdef \neg \top$;
$\varphi \wedge \psi \rmdef \neg (\neg \varphi \vee \neg \psi)$;
$\varphi \rightarrow \psi \rmdef \neg \varphi \vee \psi$;
$\eventually \varphi \rmdef \top \until \varphi$; and
$\always \varphi \rmdef \neg \eventually \neg \varphi$.  We write $w
\models \varphi$ if
$\omega$-word $w$ over $2^{AP}$ satisfies LTL formula $\varphi$. 
The satisfaction relation is defined inductively.  Let $w^j$ be the
$\omega$-word defined by $w^j_i = w_{i+j}$; let $a$ be an atomic
proposition; and let $\varphi$, $\psi$ be LTL formulae.  Then
\begin{itemize}
\item $w \models a$ if and only if $a \in w_0$;
\item $w \models \neg \varphi$ if and only if $w \not\models \varphi$;
\item $w \models \varphi \vee \psi$ if $w \models \varphi$ or $w
  \models \psi$;
\item $w \models \nextt \varphi$ if and only if $w^1 \models \varphi$;
\item $w \models \psi \until \varphi$ if and only if there exists
  $i \geq 0$ such that $w^i \models \varphi$ and, for $0 \leq j < i$,
  $w^j \models \psi$.
\end{itemize}
Further details may be found in \cite{Manna91}.

\subsection{Reinforcement Learning}
For a given MDP and a performance objective
$\ECost {\in} \{\PReach^T, \EDisct^\lambda, \EAvg\}$, the optimal cost
and an optimal strategy can be computed in polynomial time using value
iteration, policy iteration, or linear programming~\cite{Put94}.  On
the other hand, for $\omega$-regular objectives (given as DRW, SLDBW,
or LTL formulae) optimal satisfaction probabilities and strategies can
be computed using graph-theoretic techniques (computing accepting
end-component and then maximizing the probability to reach states in
such components) over the product structure.  However, when the MDP
transition/reward structure is unknown, such techniques are not
applicable.

For MDPs with unknown transition/reward structure, {\it reinforcement learning}~\cite{Sutton18}
provides a framework to learn optimal strategies from repeated
interactions with the environment. 
There are two main approaches to reinforcement learning in MDPs:
{\it model-free/direct approaches} and {\it model-based/indirect
  approaches}.
In a model-based approach, the learner interacts with the system to first
estimate the transition probabilities and corresponding rewards, and then uses
standard MDP algorithms to compute the optimal cost and strategies.
On the other hand in a model-free approach---such as Q-learning---the learner
computes optimal strategies without explicitly estimating the transition
probabilities and rewards.
We focus on model-free RL to learn a strategy that
maximizes the probability of satisfying a given $\omega$-regular objective.


\section{Problem Statement and Motivation}
\label{sec:overview}

The problem we address in this paper is the following:

\begin{quote}
\emph{  Given an MDP $\Mm$ with unknown transition structure and an
  $\omega$-regular objective $\varphi$, compute a strategy
  that maximizes the probability of $\Mm$ satisfying $\varphi$.}
\end{quote}
To apply model-free RL algorithms to this task, one needs to define
rewards that depend on the observations of the MDP and reflect the
satisfaction of the objective.  It is natural to use the product of
the MDP and an automaton monitoring the satisfaction of the objective
to assign suitable rewards to various actions chosen by the learning
algorithm.

Sadigh \emph{et al.\ }\cite{Sadigh14} were the first ones to consider model-free RL
to solve a qualitative-version of this problem, i.e., to learn a strategy
that satisfies the property with probability $1$.
For an MDP $\Mm$ and a DRW $\Aa_\varphi$ of index $k$, they formed
the product MDP $\Mm \times \Aa_\varphi$ with $k$ different ``Rabin'' reward functions
$\rho_1, \ldots, \rho_k$.
The function $\rho_i$ corresponds to the Rabin pair $(B^\times_i,
G^\times_i)$ and is defined such that for for $R_+, R_- \in \Rplus$: 
\[
\rho_i(e) = \begin{cases}
  -R_{-} & \text{if $e \in B^\times_i$}\\
  R_{+} & \text{if $e \in G^\times_i$}\\
  0 & \text{otherwise.}\\
\end{cases} 
\]
\cite{Sadigh14} claimed that if there exists a strategy satisfying an
$\omega$-regular objective $\varphi$ with probability $1$, then there
exists a Rabin pair $i$, discount factor $\lambda_* \in [0, 1[$, and
suitably high ratio $R_*$, such that for all
$\lambda \in [\lambda_*, 1[$ and $R_{-}/R_{+} \geq R_*$, any strategy
maximizing $\lambda$-discounted reward for the MDP
$(\Mm {\times} \Aa_\varphi, \rho_i)$ also satisfies the
$\omega$-regular objective $\varphi$ with probability $1$.  Using
Blackwell-optimality theorem~\cite{Hordijk2002}, a paraphrase of this
claim is that if there exists a strategy satisfying an
$\omega$-regular objective $\varphi$ with probability $1$, then there
exists a Rabin pair $i$ and suitably high ratio $R_*$, such that for
all $R_{-}/R_{+} \geq R_*$, any strategy maximizing expected average
reward for the MDP $(\Mm {\times} \Aa_\varphi, \rho_i)$ also satisfies
the $\omega$-regular objective $\varphi$ with probability $1$.  There
are two flaws with this approach.
    \begin{enumerate}
    \item
      We provide in Example~\ref{ex:two-pairs} an MDP and an
      $\omega$-regular objective $\varphi$ with 
      Rabin index $2$, such that, although there is a strategy that satisfies the
      property with probability $1$, optimal average strategies corresponding
      to any Rabin reward do not satisfy the objective with probability $1$.
    \item Even for an $\omega$-regular objective with one Rabin pair
      $(B, G)$ and $B {=} \emptyset$---i.e., one that can be specified
      by a DBW---we demonstrate in Example~\ref{ex:average-reward}
      that the problem of finding a strategy that satisfies the
      property with probability $1$ may not be reduced to finding
      optimal expected average strategies.
    \end{enumerate}
  \begin{figure}
  \centering
  \begin{tikzpicture}
    [label name/.style={circle,fill=white,inner sep=1pt}]
    \path[fill=safecellcolor] (0,0) rectangle ++(1,1);
    \path[fill=goodcellcolor] (0,1) rectangle ++(1,1);
    \path[fill=goodcellcolor] (1,0) rectangle ++(1,1);
    \path[fill=badcellcolor]  (1,1) rectangle ++(1,1);
    \draw[thick] (0,0) rectangle ++(2,2);
    \draw (0,1) -- (2,1) (1,0) -- (1,2);
    \node (G0) at (0.5, 1.5) {$g_0$};
    \node (G1) at (1.5, 0.5) {$g_1$};
    \node (B)  at (1.5, 1.5) {$b$};
    \node (R0) at (0, 0.5) [label={left:$0$}] {};
    \node (R1) at (0, 1.5) [label={left:$1$}] {};
    \node (C0) at (0.5, 0) [label={below:$0$}] {};
    \node (C1) at (1.5, 0) [label={below:$1$}] {};
  \end{tikzpicture}
  \vspace{3mm}

  \begin{minipage}{0.48\linewidth}
  \begin{tikzpicture}[scale=0.95,transform shape]
    \node[state,initial,fill=safecellcolor] (Safe) {$\mathtt{safe}$};
    \node[state,fill=badcellcolor] (Trap) [right=2.75cm of Safe]
    {$\mathtt{trap}$};
    \path[->]
    (Safe) edge node {$t_4$} (Trap)
    (Trap) edge[loop right] node {$t_5$} ();
    \path[->]
    (Safe) edge[loop,out=10,in=80,looseness=4] node[midway,swap] {$t_3$} (Safe)
    edge[loop,out=100,in=170,looseness=4] node[midway,swap] {$t_0$} (Safe)
    edge[loop,out=190,in=260,looseness=4] node[midway,swap] {$t_1$} (Safe)
    edge[loop,out=280,in=350,looseness=4] node[midway,swap] {$t_2$} (Safe);
  \end{tikzpicture}
  \end{minipage}
  \begin{minipage}{0.5\linewidth}
    \scalebox{0.9}{
  \begin{tabular}[c]{ccc}
    transition & label & Rabin sets \\\hline
    $t_0$ & $\neg b \wedge \neg g_0 \wedge \neg g_1$ & $B_0, B_1$ \\
    $t_1$ & $\neg b \wedge \neg g_0 \wedge g_1$ & $B_0, G_1$ \\
    $t_2$ & $\neg b \wedge g_0 \wedge \neg g_1$ & $G_0, B_1$ \\
    $t_3$ & $\neg b \wedge g_0 \wedge g_1$ & $G_0, G_1$ \\
    $t_4$ & $b$ & \\
    $t_5$ & $\top$ & 
  \end{tabular}
  }
  \end{minipage}

    \begin{tikzpicture}
    \let\scriptstyle\textstyle
    \node[box state,initial=left,fill=safecellcolor] (S0safe)
         {$\substack{(0,0)\\ \mathtt{safe}}$};
         
    \node[prob state,fill=safecellcolor] (S0prob) [below=2cm of S0safe] {};

    \node[box state,fill=goodcellcolor] (S1safe)
    [right=2.2cm of S0prob] {$\substack{(1,0)\\\mathtt{safe}}$};

    \node[prob state,fill=safecellcolor] (S1prob) [right=2.3cm of S1safe] {};

    \node[box state,fill=goodcellcolor] (S2safe)
    [left=2.2cm of S0prob] {$\substack{(0, 1)\\\mathtt{safe}}$};

    \node[prob state,fill=safecellcolor] (S2prob) [left=2.3cm of S2safe] {};

    \node[draw,shape=ellipse, fill=badcellcolor] (Trap)
    [below=4cm of S0safe] {$\substack{* \\\mathtt{trap}}$};

    \path[->]
    (S0safe) edge[-,swap] node {$\substack{\mathtt{go}\\\{B_0, B_1\}}$} (S0prob)
    (S0prob) edge[swap] node {$p$} (S2safe) edge node {$1{-}p$} (S1safe)
    (S1safe) edge[-,swap] node[above] {$\substack{\mathtt{go}\\\{G_0, B_1\}}$} (S1prob)
    (S1prob) edge[swap, out=90, in=0] node {$1{-}p$} (S0safe) edge[out=270, in=0] node {$p$} (Trap)
    (S2safe) edge[-,swap] node {$\substack{\mathtt{go}\\ \{B_0, G_1\}}$} (S2prob)
    (S2prob) edge[swap, out=90, in=180] node[above] {$p$} (S0safe) edge[out=270,
    in=180] node[below] {$1{-}p~~~~$} (Trap)
    
    (Trap)   edge [loop above, looseness=6] node {$\mathtt{go, rest}$} (Trap)
    (S1safe)   edge [loop above, looseness=6] node
      {$\substack{\mathtt{rest}\\\{G_0, B_1\}}$} (S1safe)
    (S2safe)   edge [loop above, looseness=6] node {$\substack{\mathtt{rest}\\ \{B_0, G_1\}}$} (S2safe)
    (S0safe)   edge [loop above, looseness=6] node {$\substack{\mathtt{rest}\\ \{B_0, B_1\}}$} (S0safe);
    \end{tikzpicture}
  \caption{ A grid-world example (top), a Rabin automaton for $[(\eventually\always g_0) \vee
      (\eventually\always g_1)] \wedge \always \neg b$ (center), and
 product MDP (bottom).}
  \label{fig:two-rabin-pairs}
  \end{figure}
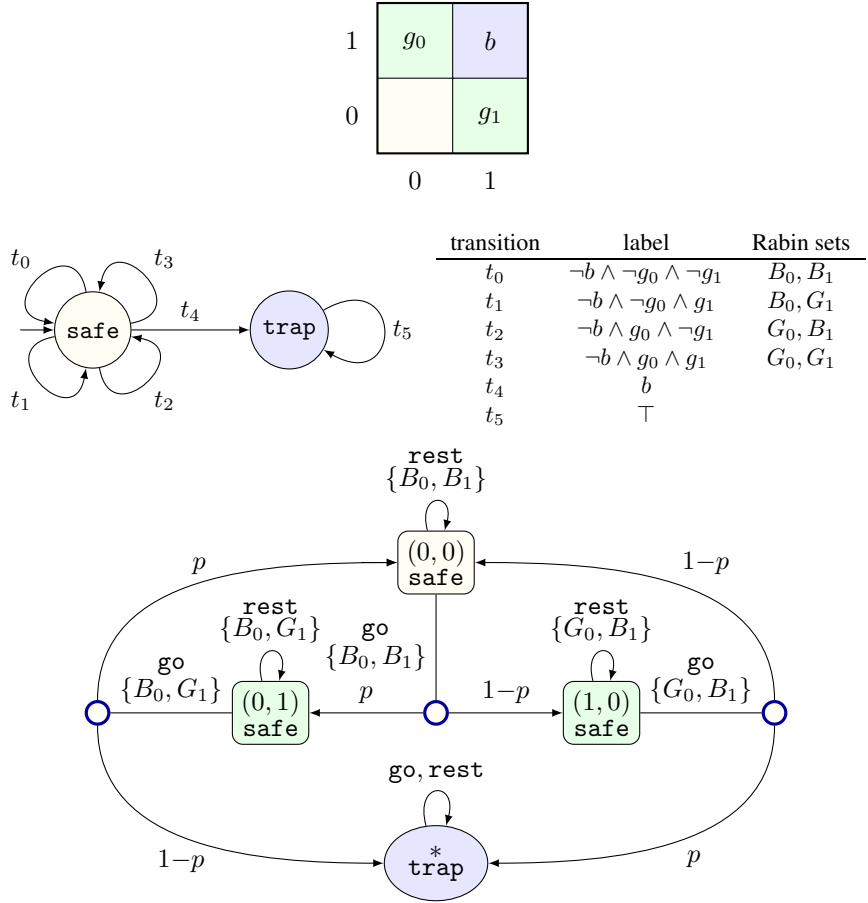

    \begin{example}[Two Rabin Pairs]
  \label{ex:two-pairs}
  Consider the MDP given as a simple grid-world example shown in
  Figure~\ref{fig:two-rabin-pairs}. 
  Each cell (state) of the MDP is labeled with the atomic propositions that are
  true there. 
  In each cell, there is a choice between two actions $\mathtt{rest}$ and
  $\mathtt{go}$.
  With action $\mathtt{rest}$ the state of the MDP does not change. However,
  with action $\mathtt{go}$ the MDP moves to the other cell in the same row
  with probability $p$, or to the other cell in the same column with probability
  $1{-}p$. 
  The initial cell is $(0,0)$.

The specification is given by LTL formula
\[
\varphi = [(\eventually\always g_0) \vee (\eventually\always g_1)]
\wedge \always \neg b.
\]
A DRW that accepts $\varphi$ is shown in
Figure~\ref{fig:two-rabin-pairs}.  The DRW has two accepting pairs:
$(B_0,G_0)$ and $(B_1,G_1)$.  The table besides the automaton gives, for
each transition, its label and the $B$ and $G$ sets to which it
belongs.

The optimal strategy that satisfies the objective $\varphi$ with probability $1$ is
to choose $\mathtt{go}$ in the first Cell~$(0,0)$ and to choose $\mathtt{rest}$
subsequently no matter which state is reached.
However, notice that for both Rabin pairs, the optimal strategy for expected average
reward (or analogously optimal discounted strategies for sufficiently large
discount factors $\lambda$)  is to maximize the probability of reaching one of
the $(0, 1), \texttt{safe}$ or $(1, 0), \texttt{safe}$ states of the product and
stay there forever. 
For the first accepting pair the maximum probability of satisfaction is
$\frac{1}{2-p}$, while for the second pair it is $\frac{1}{1+p}$.
\end{example}

\begin{figure}
  \centering
  \begin{minipage}{0.45\textwidth}
      \begin{tikzpicture}[scale=1.25,action/.style={->,very thick,darkred},
          cell name/.style={circle,fill=white,inner sep=1pt}]
        \coordinate (t1) at (0,0);
        \coordinate (t2) at ($ (t1) + (2,0) $);
        \coordinate (t3) at ($ (t1) + (60:2) $);
        \coordinate (c0) at ($ (t1) + (2,{2/sqrt(3)}) $);
        \foreach \x in {1, 2, 3} {
          \pgfmathparse{\x == 1 ? "badcellcolor" : "goodcellcolor"}
          \edef\cellcolor{\pgfmathresult}
          \path[fill=\cellcolor] (t\x) -- ++(2,0) -- ++(120:2) --cycle;
          \coordinate (c\x) at ($ (t\x) + (1,{1/sqrt(3)}) $);
        }
        \draw[thick] (0,0) -- ++(4,0) -- ++(120:4) --cycle;
        \draw[fill=safecellcolor] (2,0) -- ++(60:2) -- ++(-2,0) --cycle;
        \node[cell name] at ($ (t1) + (2,3) $) {$3$};
        \node[cell name] at ($ (t2) + (-0.6,1.5) $) {$0$};
        \node[cell name] at ($ (t1) + (0.4,0.25) $) {$1$};
        \node[cell name] at ($ (t2) + (1,1.3) $) {$2$};
        \draw[action] ($ (c0) + (0.1,0) $) -- node[pos=0.25,right] {$a$} +(0,0.9);
        \draw[action] ($ (c0) - (0,0.2) $) -- node[pos=0.4,right]
             {$b$} ++(0,-0.5) -- node[at end,below,black] {$1-p$} ++(-0.7,0);
             \draw[action] ($ (c0) - (0,0.2) $) ++(0,-0.5) --
             node[at end,below,black] {$p$} ++(0.7,0);
             \draw[action] ($ (c3) - (0.1,0) $) -- node[pos=0.25,left] {$c$} +(0,-0.9);
             \draw[action] (c1) -- node[pos=0.1,above] {$d$} ++(30:0.9);
             \draw[action] ($ (c2) + (-30:0.2) $)
             arc[radius=0.4,start angle=150,end angle=490] node[at start,right] {$e$};
             \draw[action] (c2) -- node[pos=0.1,above] {$f$} ++(150:0.9);
      \end{tikzpicture}
  \end{minipage}
  \begin{minipage}{0.45\textwidth}
      \begin{tikzpicture}
        \node[state,initial,fill=safecellcolor] (G) {$\mathtt{safe}$};
        \node[state,fill=badcellcolor] (B) [right=2cm of G]
        {$\mathtt{trap}$};
        \path[->]
        (G) edge [loop above] node[accepting dot,label={$\mathtt{g} \wedge \neg \mathtt{b}$}] {} ()
        edge [loop below] node {$\neg\mathtt{g} \wedge \neg \mathtt{b}$} ()
        edge              node {$\mathtt{b}$} (B)
        (B) edge [loop below] node {$\top$} ();
      \end{tikzpicture}
  \end{minipage}
  \vspace{3mm}
  
  \begin{minipage}{0.8\textwidth}
    \centering
      \begin{tikzpicture}
        \node[box state,initial above,fill=safecellcolor] (S0safe)
        {$0,\mathtt{safe}$};
        \node[box state,fill=safecellcolor] (S3safe)
        [below right=1.5cm and 1.5cm of S0safe] {$3,\mathtt{safe}$};
        \node[prob state,fill=safecellcolor] (S0prob) [below left=1.5cm and 1.5cm of S0safe] {};
        \node[box state,fill=safecellcolor] (S1safe)
        [below right=1.5cm and 1cm of S0prob] {$1,\mathtt{safe}$};
        \node[box state,fill=safecellcolor] (S2safe)
        [below left=1.5cm and 1cm of S0prob] {$2,\mathtt{safe}$};
        \node[draw,shape=ellipse, fill=badcellcolor] (Trap)
        [right=2.2cm of S1safe] {$*,\mathtt{trap}$};
        \path[->]
        (S0safe) edge node {\textcolor{darkred}{$a$}} (S3safe)
        edge[-,swap] node {\textcolor{darkred}{$b$}} (S0prob)
        (S0prob) edge[swap] node {$p$} (S2safe)
        edge node {$1-p$} (S1safe)
        (S1safe) edge node {\textcolor{darkred}{$d$}} (Trap)
        (Trap)   edge [loop right, looseness=5] node {} (Trap)
        (S2safe) edge [loop left,looseness=4,out=200,in=160] node[accepting dot,
        label={left:\textcolor{darkred}{$e$}}] {} (S2safe)
        edge[bend left=50] node[accepting dot,
        label={above left:\textcolor{darkred}{$f$}}] {} (S0safe)
        (S3safe) edge[bend right=50] node[accepting dot,
        label={above right:\textcolor{darkred}{$c$}}] {} (S0safe);
      \end{tikzpicture}
  \end{minipage}
  \caption{A grid-world example.  The arrows represent actions (top left).  When
    action $b$ is performed, Cell~2 is reached with probability $p$ and
    Cell~$3$ is reached with probability $1-p$, for $0 < p < 1$. Automaton for
    $\varphi = (\always \neg \mathtt{b}) \wedge (\always\eventually 
    \mathtt{g})$ (top right). The dotted transition is the only accepting transition.
    Product MDP (bottom).}
  \label{fig:grid-world}
\end{figure}
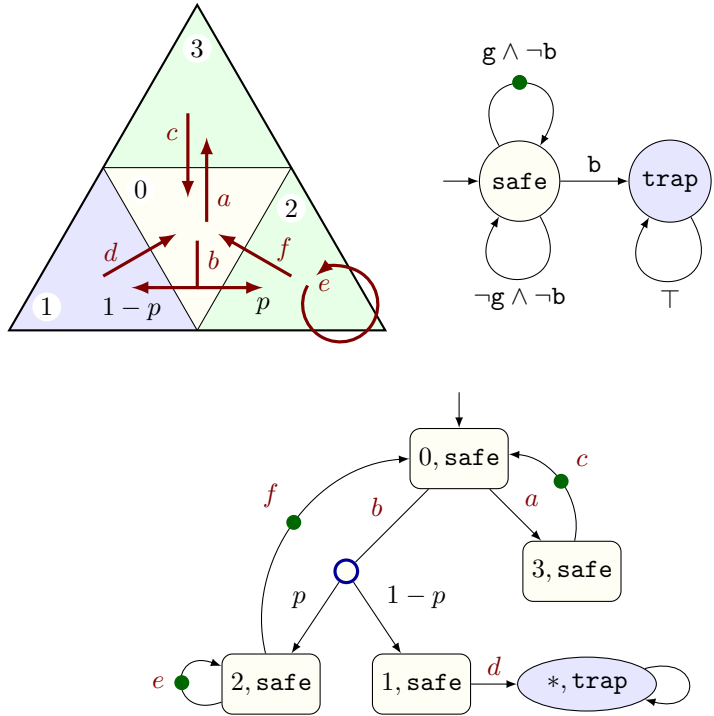

\begin{example}[DBW to Expected Average Reward Reduction] 
\label{ex:average-reward}
This counterexample demonstrates that even for deterministic B\"uchi objectives,
the problem of finding an optimal strategy satisfying an objective may
not be
reduced to the problem of finding an optimal average strategy. 
Consider the simple grid-world example of Figure~\ref{fig:grid-world}
with the specification
$\varphi = (\always \neg \mathtt{b}) \wedge (\always\eventually
\mathtt{g})$, where atomic proposition $\mathtt{b}$ (blue) labels Cell~1 and
atomic proposition $\mathtt{g}$ (green)  labels Cells~2 and~3.
Actions enabled in various cells and their probabilities are
depicted in the figure.

The strategy from Cell~$0$ that chooses Action~$a$ guarantees
satisfaction of $\varphi$ with probability $1$.  An automaton with
accepting transitions for $\varphi$ is shown in
Figure~\ref{fig:grid-world}; it is a DBW
(or equivalently a DRW with one pair $(B, G)$ and $B = \emptyset$).

The product MDP is shown at the bottom of Figure~\ref{fig:grid-world}.
All states whose second component is $\mathtt{trap}$ have been merged.
Notice that there is no negative reward since the set $B$ is empty.
If reward is positive and equal for all accepting transitions, and $0$
for all other transitions, then when $p > 1/2$, the strategy that
maximizes expected average reward chooses Action~$b$ in the initial
state and Action~$e$ from State~$(2,\mathtt{safe})$.  Note that for
large values of $\lambda$ the optimal expected average reward
strategies are also optimal strategies for the $\lambda$-discounted
reward objective. However, these strategies are not optimal for
$\omega$-regular objectives.
\end{example}


Example~\ref{ex:two-pairs} shows that one cannot select a pair from a
Rabin acceptance condition ahead of time.  This problem can be avoided
by the use of B\"uchi acceptance conditions.  While DBWs are not
sufficiently expressive, SLDBWs express all $\omega$-regular
properties and are suitable for probabilistic model checking.  In the
next section, we show that they are also ``the ticket'' for model-free
reinforcement learning because they allow us to maximize the
probability of satisfying an $\omega$-regular specification by
solving a reachability probability problem that can be solved
efficiently by off-the-shelf RL algorithm.


\section{Learning from Omega-Regular Rewards}
\label{sec:reinforcement}

\def\atm{\mathcal{A}}

Let us fix in this section an MDP $\Mm$, an $\omega$-regular property
$\varphi$ and its corresponding SLDBW $\mathcal{A}$.

A Markov chain is an MDP whose set of actions is a singleton.
A {\em bottom strongly connected component} (BSCC) of a Markov chain is any of its end-components.
A BSCC is accepting if it contains an accepting transition and otherwise it is a rejecting BSCC.
For any MDP $\Mm$ and positional strategy $\sigma$, let $\Mm_\sigma$ be the Markov chain resulting from resolving the nondeterminism in 
$\Mm$ using $\sigma$. 

We start with recalling two key properties of SLDBWs:
\begin{theorem} \cite{Hahn15}
  \label{th:recall}
 (1) 
 The probability of satisfying $\varphi$ in $\Mm$ is at least as large as the probability of satisfying the B\"uchi condition in $\Mm \times \atm$, and
 (2) if $p$ is the optimal probability of satisfying $\varphi$ in $\Mm$, then there exists a positional strategy $\sigma$ such that the probability of satisfying the B\"uchi condition in $(\Mm \times \atm)_\sigma$ is $p$.
\end{theorem}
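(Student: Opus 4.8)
The plan is to prove the two parts separately: part~(1) is a ``soundness'' argument that turns a strategy on the product into one on $\Mm$, and part~(2) is the ``completeness'' direction, which is where the specific shape of the \cite{Hahn15} construction is essential and which is the real obstacle. Since both statements are, in effect, a restatement of the correctness theorem for suitable LDBWs established in \cite{Hahn15}, what I would present is a proof sketch that isolates exactly the facts used later, deferring the detailed automaton analysis to that reference.

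For part~(1), I would fix an arbitrary strategy $\sigma^\times$ on $\Mm \times \atm$, noting that such a strategy simultaneously picks $\Mm$-actions and resolves the single nondeterministic ``guess'' of $\atm$. Marginalising out the automaton component turns $\sigma^\times$ into a strategy $\sigma$ on $\Mm$, and the projection $\seq{(s_0,q_0),a_1,(s_1,q_1),\ldots}\mapsto\seq{s_0,a_1,s_1,\ldots}$ is measure preserving, because the automaton component never influences the transition probabilities of $\Mm$. If a product run satisfies the B\"uchi condition, then its sequence of automaton states is an accepting run of $\atm$ on the label of the projected run $r$, so $L(r)\in\mathcal{L}(\atm)=\varphi$; hence the $\Mm$-measure of runs satisfying $\varphi$ under $\sigma$ is at least the probability that $\sigma^\times$ satisfies the B\"uchi condition. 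Taking the supremum over $\sigma^\times$ gives $\PSat_*^\varphi(s)\geq$ the optimal B\"uchi probability in the product, which is part~(1).

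For part~(2), I would start from an optimal strategy for $\varphi$ in $\Mm$; passing to the product with a deterministic Rabin automaton $\mathcal{R}$ for $\varphi$, I may assume it is a positional strategy $\pi$ on $\Mm\times\mathcal{R}$ whose accepting runs are exactly those absorbed into accepting end components, the absorption probability being $p$. The job is to ``replay'' $\pi$ on $\Mm\times\atm$ while resolving the guess of $\atm$ causally --- precisely what a general NBW forbids and what ``suitability'' buys us. The \cite{Hahn15} automaton has a deterministic initial part (a subset construction over the NBW states), one guess transition, and a deterministic breakpoint part; so while $\pi$ drives $\Mm$, the $\atm$-component is forced until the guess is made. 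Inside an accepting end component $\mathcal{C}$ of $\Mm\times\mathcal{R}$ the reachable subset-states of $\atm$ eventually stabilise on a recurrent one, and at that state $\sigma$ takes the guess that launches the breakpoint construction from the set of NBW states whose residual runs are accepting. The correctness of the breakpoint construction, applied inside the end component via Theorem~\ref{th:end-comp}, then gives that an accepting transition recurs with probability~$1$; hence $\sigma$ satisfies the B\"uchi condition on $(\Mm\times\atm)_\sigma$ with probability at least~$p$, which with part~(1) forces equality. Finally $\sigma$ is positional because $\pi$ is and because the only extra decision --- whether to guess --- depends only on the current automaton state, which is part of the product state.

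The main obstacle is the last step of the previous paragraph: one has to argue that scheduling the guess positionally (``guess at this recurrent subset-state of $\mathcal{C}$'') does not let a premature guess on the way into $\mathcal{C}$ spoil acceptance, and that once the breakpoint part is entered inside $\mathcal{C}$ an accepting transition genuinely recurs almost surely. Both are consequences of the end-component properties together with the breakpoint analysis of \cite{Hahn15}; accordingly I would treat Theorem~\ref{th:recall} as a recollection of that analysis rather than reprove the automaton construction in full.
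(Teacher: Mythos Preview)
Your approach matches the paper's: for (1) the paper gives the same one-line soundness argument (any accepting run of $\Mm\times\atm$ projects to a run of $\Mm$ whose label is in $\varphi$, a fact valid for any nondeterministic automaton regardless of acceptance mechanism), and for (2) it simply remarks that ``the proof of (2) is involved'' and defers entirely to \cite{Hahn15}. Your sketch for (2) is actually more detailed than anything the paper provides, but both treat the theorem as a recollection of the cited construction rather than something to be reproved.
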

Similar observations can also be found in other work, e.g.\  \cite{Vardi85,Courco95,Sicker16b}.
While the proof of (2) is involved, (1) follows immediately from the fact that any run in $\Mm$ 
induced by an accepting run of $\Mm \times \atm$ satisfies $\varphi$ (while the converse does not necessarily hold);
this holds for all nondeterministic automata, regardless of the acceptance mechanism.

\begin{definition}
  \label{def:augmented-mdp}
  For any $\zeta \in ]0,1[$, 
  the \emph{augmented MDP} $\Mm^\zeta$ is an MDP
  obtained from $\Mm\!\times\!\atm$ by adding a new 
  state $t$ with a self-loop to the set of states of
  $\Mm\!\times\!\atm$, and by making $t$ a destination of each accepting transition
  $\tau$ of $\Mm\!\times\!\atm$ with probability $1-\zeta$. 
  The original probabilities of all other destinations 
  of an accepting transition $\tau$ are multiplied by $\zeta$.
\end{definition}

With a slight abuse of notation, if $\sigma$ is a strategy on the augmented MDP
$\Mm^\zeta$, we denote by $\sigma$ also the strategy on $\Mm \!\times\!\atm$ obtained by
removing $t$ from the domain of $\sigma$.

We let $p^\sigma_s(\zeta)$ denote the probability of reaching 
$t$ in 
$\Mm^\zeta_\sigma$ when starting at $s$.
Notice that we can encode this value as the expected average reward in the following rewardful MDP
$(\Mm^\zeta, \rho)$ where we set 
the reward function $\rho(t,a) = 1$ for all $a \in A$ and $\rho(s,a) = 0$
otherwise.
For any strategy $\sigma$, $p^\sigma_s(\zeta)$ and the reward of $\sigma$
from $s$ in $(\Mm^\zeta, \rho)$ are the same.

We also let $a^\sigma_s$ be the probability that a run that starts from $s$ in $(\Mm\times\atm)_\sigma$ is accepting.
We now show the following basic properties of these values.
\begin{lemma}
  \label{lem:basic_properties}
  If $\sigma$ is a positional strategy on $\Mm^\zeta$, then, for every state
  $s$ of $(\Mm \!\times\!\atm)_\sigma$, the following holds:
  \begin{enumerate}
   \item If $s$ is in a rejecting BSCC of $(\Mm \!\times\!\atm)_\sigma$, then $p^\sigma_s(\zeta) = 0$.
   \item If $s$ is in an accepting BSCC of $(\Mm\!\times\!\atm)_\sigma$, then $p^\sigma_s(\zeta) = 1$.
   \item $p^\sigma_s(\zeta) \geq a^\sigma_s$
   \item If $p^\sigma_s(\zeta) = 1$ then no rejecting BSCC is reachable from $s$ in 
   $(\Mm \!\times\!\atm)_\sigma$ and $a^\sigma_s = 1$.
  \end{enumerate}
\end{lemma}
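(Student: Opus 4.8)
The plan is to analyze the Markov chain $(\Mm \times \atm)_\sigma$ obtained by fixing the positional strategy $\sigma$, together with its lift to $\Mm^\zeta_\sigma$, and to reason about the behavior of runs in terms of the BSCC they eventually reach. Recall from Theorem~\ref{th:end-comp} that in a finite Markov chain almost every run eventually enters some BSCC and thereafter visits every transition of that BSCC infinitely often. The key structural observation is that the only difference between $(\Mm \times \atm)_\sigma$ and $\Mm^\zeta_\sigma$ is that each accepting transition $\tau$ is redirected to the absorbing state $t$ with probability $1-\zeta$ (and the remaining mass is scaled by $\zeta$); so $t$ is reached from $s$ if and only if, along the run in $(\Mm \times \atm)_\sigma$, some accepting transition is taken and the corresponding (independent) coin comes up in favour of the diversion --- an event whose probability I will control.

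For item~1, if $s$ lies in a rejecting BSCC $C$, then no run from $s$ in $(\Mm \times \atm)_\sigma$ ever takes an accepting transition (it stays in $C$ forever, and $C$ has no accepting transition), so no diversion to $t$ is ever offered; hence $p^\sigma_s(\zeta) = 0$. For item~2, if $s$ lies in an accepting BSCC $C$, then $C$ contains an accepting transition $\tau$, and by the end-component property every run from $s$ that is not diverted will eventually traverse $\tau$; each traversal independently diverts to $t$ with probability $1-\zeta > 0$, and a run confined to $C$ traverses accepting transitions infinitely often, so the probability of never being diverted is $\lim_{n\to\infty}\zeta^n = 0$. Thus $p^\sigma_s(\zeta) = 1$. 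For item~3, I will couple the two processes: a run from $s$ in $(\Mm \times \atm)_\sigma$ is accepting iff it takes accepting transitions infinitely often, i.e.\ iff it ends in an accepting BSCC; by items~1 and~2 together with the fact that $p^\sigma_{(\cdot)}(\zeta)$ is constant ($0$ or $1$) on each BSCC and that $p^\sigma_s(\zeta)$ is the expectation over the first BSCC reached, we get $p^\sigma_s(\zeta) = \Pr[\text{reach an accepting BSCC from } s] \ge a^\sigma_s$; the inequality (rather than equality) is there because some runs may be diverted to $t$ before settling into a BSCC, and those transient diversions can only increase the probability of reaching $t$ relative to the probability of ending in an accepting BSCC --- equivalently, $a^\sigma_s$ ignores the mass diverted in the transient part. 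Actually it is cleanest to argue $a^\sigma_s = \Pr[\text{reach an accepting BSCC}] \le \Pr[\text{reach } t] = p^\sigma_s(\zeta)$ directly, since every run ending in an accepting BSCC is, with probability one under the augmentation, eventually diverted.

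For item~4, suppose $p^\sigma_s(\zeta) = 1$. If some rejecting BSCC $C$ were reachable from $s$ in $(\Mm \times \atm)_\sigma$, then with positive probability a run from $s$ reaches $C$ without taking any accepting transition first (a finite path of positive probability exists, and along it no diversion occurs except with the scaling that still leaves positive mass); once in $C$ it stays forever and never reaches $t$ (by item~1 applied from any state of $C$), contradicting $p^\sigma_s(\zeta) = 1$. Hence no rejecting BSCC is reachable, so almost every run from $s$ ends in an accepting BSCC, giving $a^\sigma_s = 1$; alternatively this follows from item~3 via $1 = p^\sigma_s(\zeta) \ge a^\sigma_s$ combined with $a^\sigma_s \le 1$ --- but I should double-check that $p^\sigma_s(\zeta)=1$ genuinely forces $a^\sigma_s = 1$ rather than merely $a^\sigma_s$ being close to $1$, and the BSCC-reachability argument is the safe route since it is a purely qualitative (support-level) statement independent of $\zeta$.

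The main obstacle is item~3, and more precisely pinning down exactly why the inequality is the right direction and is tight only up to transient behavior: one must be careful that "reaching $t$" aggregates both the eventual diversion out of an accepting BSCC (probability-one conditional on landing in one) and any earlier transient diversion (extra probability mass), whereas $a^\sigma_s$ counts only the former; writing this as a clean chain of (in)equalities $a^\sigma_s = \Pr_s[\text{end in accepting BSCC of }(\Mm\times\atm)_\sigma] \le \Pr_s[\text{eventually take an accepting transition}] = p^\sigma_s(\zeta)$ --- where the last equality uses that in $\Mm^\zeta_\sigma$ taking any accepting transition leads to $t$ with probability one in the limit (geometric trials) --- is the crux, and the rest is bookkeeping with Theorem~\ref{th:end-comp}.
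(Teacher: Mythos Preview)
Your arguments for items~1, 2, and~4 are sound and track the paper's reasoning closely (in item~4 the phrase ``without taking any accepting transition first'' is not literally true, but your parenthetical immediately repairs it: the finite path to $C$ may contain accepting transitions and still has positive probability, namely $\zeta^{k}$ times its original mass, in $\Mm^\zeta_\sigma$).  For item~3, your second formulation---$a^\sigma_s = \Pr_s[\text{end in an accepting BSCC}] \le \Pr_s[\text{reach } t \text{ in } \Mm^\zeta_\sigma] = p^\sigma_s(\zeta)$, justified by the observation that a run ending in an accepting BSCC is almost surely diverted eventually---is correct.  However, the chain you flag as the ``crux'' in the final paragraph is broken: the claimed equality $\Pr_s[\text{eventually take an accepting transition}] = p^\sigma_s(\zeta)$ is false in general.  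A run that takes only finitely many accepting transitions (some in the transient part, then lands in a rejecting BSCC) has positive probability of never being diverted, so one only has $p^\sigma_s(\zeta) \le \Pr_s[\text{eventually take an accepting transition}]$, and that inequality points the wrong way for your chain.  Drop the intermediate event and keep your second formulation.

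The paper proves item~3 by the complementary route: it compares the probability of reaching a \emph{rejecting} BSCC in the two chains.  The same finite paths lead to rejecting BSCCs in $(\Mm\times\atm)_\sigma$ and in $\Mm^\zeta_\sigma$, and each such path has probability in $\Mm^\zeta_\sigma$ no larger than in $(\Mm\times\atm)_\sigma$ (strictly smaller iff it contains an accepting transition, because of the $\zeta$-scaling), whence $1-p^\sigma_s(\zeta) \le 1-a^\sigma_s$.  Your coupling reaches the same inequality from the accepting side; both are valid one-liners once stated cleanly, though the paper's path-by-path comparison on the rejecting side dovetails directly with the quantitative estimate used in the proof of Lemma~\ref{th:reachability-probability}.
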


\begin{proof}
  (1) holds as there are no accepting transition in a rejecting BSCC of $(\Mm \!\times\!\atm)_\sigma$, 
  and so $t$ cannot be reached when starting at $s$ in $\Mm^\zeta$.
  (2) holds because $t$ (with its self-loop) is the only BSCC reachable from $s$ in $\Mm^\zeta$.
  In other words, $t$ (with its self-loop) and the rejecting BSCCs of $(\Mm \!\times\!\atm)_\sigma$ are the only BSCCs in $\Mm^\zeta$.
  (3) then follows, because the same paths lead to a rejecting BSCCs in $(\Mm \!\times\!\atm)_\sigma$ and $\Mm^\zeta_\sigma$, while the probability of each such a path is no larger---and strictly smaller iff it contains an accepting transition---than in $\Mm^\zeta_\sigma$.
  (4) holds because, if $p^\sigma_s(\zeta) = 1$, then $t$ (with its self-loop) is the only BSCC reachable from $s$ in $\Mm^\zeta_\sigma$.
  Thus, there is no path to a rejecting BSCC in $\Mm^\zeta_\sigma$, and therefore no path to a rejecting BSCC in $(\Mm \!\times\!\atm)_\sigma$.
\end{proof}

\begin{lemma}
  \label{th:reachability-probability}
  Let $\sigma$ be a positional strategy on $\Mm^\zeta$.
  For every state $s$ of $\Mm^\zeta$, 
  we have that $\llim\zeta1 p^\sigma_s(\zeta) = a^\sigma_s$.
\end{lemma}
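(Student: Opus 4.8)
The plan is to build an explicit coupling between the Markov chains $(\Mm \!\times\!\atm)_\sigma$ and $\Mm^\zeta_\sigma$ that isolates the effect of $\zeta$. Since $\sigma$ is positional, both are finite Markov chains, and $\Mm^\zeta_\sigma$ differs from $(\Mm \!\times\!\atm)_\sigma$ only in that, each time an accepting transition is traversed, an independent coin --- heads with probability $1-\zeta$ --- decides whether the run is redirected to the sink $t$ (on tails the original destinations, rescaled by $\zeta$, are taken). Concretely I would draw a run $r$ of $(\Mm \!\times\!\atm)_\sigma$ started at $s$ together with an i.i.d.\ sequence $(U_i)_{i\ge 1}$ of Bernoulli$(1-\zeta)$ variables, let $\tau_1<\tau_2<\cdots$ enumerate the positions at which $r$ takes an accepting transition, and let the coupled run follow $r$ up to the first index $i$ with $U_i=1$ and $\tau_i$ defined, then jump to $t$ and stay there. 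On the resulting probability space the marginal of $r$ is $\Pr^{(\Mm\times\atm)_\sigma}_s$ while the coupled augmented run has law $\Pr^{\Mm^\zeta}_s$, so $p^\sigma_s(\zeta)$ is exactly the probability that some accepting transition of $r$ gets redirected.

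The next step is to split on whether $r$ is accepting. In the finite Markov chain $(\Mm \!\times\!\atm)_\sigma$ almost every run eventually dwells in a BSCC and is accepting iff that BSCC is accepting; an accepting run traverses accepting transitions infinitely often, while a non-accepting run traverses only finitely many, all of them before it enters a rejecting BSCC. Write $A$ for the event ``$r$ is accepting'' (so $\Pr(A)=a^\sigma_s$) and $N$ for the number of accepting transitions of $r$, which is finite almost surely on the complement $\bar A$. On $A$ infinitely many independent $(1-\zeta)$-coins are tossed, so a redirection occurs almost surely; given $\bar A$ and $N=n$, a redirection occurs with probability $1-\zeta^{\,n}$. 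Conditioning accordingly yields
\[
  p^\sigma_s(\zeta) \;=\; a^\sigma_s \;+\; \eE^\sigma_s\!\big[\, \mathbf{1}_{\bar A}\,(1-\zeta^{\,N}) \,\big].
\]

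Finally I would let $\zeta\uparrow 1$. For each fixed $n$ the quantity $1-\zeta^{\,n}$ decreases to $0$, so $\mathbf{1}_{\bar A}(1-\zeta^{\,N})$ decreases to $0$ almost surely and is dominated by $1$; monotone (equivalently dominated) convergence then gives $\eE^\sigma_s[\mathbf{1}_{\bar A}(1-\zeta^{\,N})]\to 0$, hence $\llim\zeta1 p^\sigma_s(\zeta)=a^\sigma_s$. As a by-product the convergence is monotone from above, which re-derives $p^\sigma_s(\zeta)\ge a^\sigma_s$ (Lemma~\ref{lem:basic_properties}(3)). The one delicate point is the justification of the displayed identity --- in particular that a redirection is almost sure on $A$ and that $N$ is almost surely finite on $\bar A$ --- and both facts rest on the finite-Markov-chain / BSCC structure guaranteed by positionality of $\sigma$; everything after that is a one-line limit argument. (I would avoid the alternative route of passing to the limit directly in the linear reachability system, since the limiting system is singular --- the sub-block on an accepting BSCC of $(\Mm\times\atm)_\sigma$ is stochastic --- which makes continuity of its solution awkward to argue.)
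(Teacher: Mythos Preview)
Your argument is correct. The coupling you describe is exactly right: conditional on not redirecting, the augmented chain takes the same step as $(\Mm\times\atm)_\sigma$, so sampling the full run $r$ of $(\Mm\times\atm)_\sigma$ together with independent Bernoulli$(1{-}\zeta)$ coins at the accepting positions realises $\Mm^\zeta_\sigma$. The dichotomy $N=\infty$ on $A$ versus $N<\infty$ a.s.\ on $\bar A$ follows from the BSCC structure as you say, and the limit then drops out by dominated convergence.

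The paper takes a slightly different route. Rather than an exact identity, it proves the two-sided estimate $a^\sigma_s \le p^\sigma_s(\zeta) \le a^\sigma_s + (1-\zeta)\,f^\sigma_s$, where $f^\sigma_s$ is the expected number of accepting transitions taken before a BSCC of $(\Mm\times\atm)_\sigma$ is reached (finite because the hitting time of a BSCC has finite expectation). The upper bound comes from the elementary inequality $1-\zeta^k\le k(1-\zeta)$ applied to the distribution of the number of transient accepting steps. This is coarser than your identity but gives an explicit linear rate in $1-\zeta$, and that rate is what the paper actually uses in the proof of Theorem~\ref{th:main}: the threshold $\zeta'$ is chosen so that $(1-\zeta')\,f_{\max}$ is below the optimality gap $\delta$. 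Your dominated-convergence step, as written, does not immediately deliver such a uniform rate. That said, it is a one-line addendum to recover it from your identity: on $\bar A$ all accepting steps are transient, so $\eE[\mathbf 1_{\bar A}N]\le f^\sigma_s$, and $1-\zeta^N\le N(1-\zeta)$ gives $p^\sigma_s(\zeta)-a^\sigma_s\le (1-\zeta)\,f^\sigma_s$. With that remark your approach subsumes the paper's and is arguably cleaner.
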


\begin{proof}
As shown in Lemma \ref{lem:basic_properties}(3) for all $\zeta$ we have $p^\sigma_s(\zeta) \geq a^\sigma_s$.
For a coarse approximation of their difference, we recall that $(\Mm \!\times\!\atm)_\sigma$ is a finite Markov chain.
The expected number of transitions taken before reaching a BSCC from $s$ in $(\Mm \!\times\!\atm)_\sigma$ is therefore a finite number.
Let us refer to the (no larger) expected number of accepting transitions before reaching a BSCC 
when starting at $s$ in $(\Mm \!\times\!\atm)_\sigma$ as $f^\sigma_s$.
We claim that $a^\sigma_s \geq p^\sigma_s(\zeta)-(1-\zeta) \cdot f^\sigma_s$.
This is because the probability of reaching a rejecting BSCC in $(\Mm \!\times\!\atm)_\sigma$ 
is at most the probability of reaching a rejecting BSCC in $\Mm^\zeta_\sigma$, which is at most $1-p^\sigma_s(\zeta)$,
plus the the probability of moving on to $t$ from a state that is not in any BSCC in $(\Mm \!\times\!\atm)_\sigma$, 
which we are going to show next is at most $f^\sigma_s \cdot (1-\zeta)$.

First, a proof by induction shows that
$(1-\zeta^k) \leq k (1-\zeta)$ for all $k \geq 0$.
Let $P^\sigma_s(\zeta, k)$ be the probability of generating a path from $s$ with $k$ accepting transitions
before $t$ or a node in some BSCC of $(\Mm \!\times\!\atm)_\sigma$ is reached in $\Mm^\zeta_\sigma$.
The probability of seeing $k$ accepting transitions and not reaching $t$ is at least $\zeta^k$.
Therefore, the probability of moving to $t$ from a state not in any BSCC is at most
$\sum_k P^\sigma_s(\zeta, k) (1{-}\zeta^k) \leq \sum_k P^\sigma_s(\zeta, k) k \cdot
(1{-}\zeta) \leq f^\sigma_s \cdot (1{-}\zeta)$.
\end{proof}

This provides us with our main theorem.
\begin{theorem}
  \label{th:main}
  There exists a threshold $\zeta' \in ]0,1[$ such that, for all
  $\zeta > \zeta'$ and every state $s$, any strategy $\sigma$ that
  maximizes $p^{\sigma}_s(\zeta)$ in $\Mm^\zeta$ is (1) an optimal
  strategy in $\Mm\times\atm$ from $s$ and (2) induces an optimal for
  the original MDP $\Mm$ from $s$ with objective $\varphi$.
\end{theorem}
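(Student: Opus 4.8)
The plan is to read the conclusion off the two-sided estimate
$a^\sigma_s \le p^\sigma_s(\zeta) \le a^\sigma_s + (1-\zeta)\,f^\sigma_s$,
whose left half is Lemma~\ref{lem:basic_properties}(3) and whose right half is exactly the inequality established inside the proof of Lemma~\ref{th:reachability-probability} (with $f^\sigma_s$ the finite expected number of accepting transitions seen before a BSCC of $(\Mm\!\times\!\atm)_\sigma$ is hit), together with the observation that there are only finitely many positional strategies and finitely many product states, so these error terms can be bounded uniformly. First I would reduce to positional strategies: a reachability objective on the finite MDP $\Mm^\zeta$ admits a positional optimal strategy and model-free RL returns such a strategy, so it suffices to treat a positional $\sigma$ maximizing $p^{\sigma}_{s}(\zeta)$.

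Next I would set up the data. Write $\mathit{Pos}$ for the (finite) set of positional strategies of $\Mm\!\times\!\atm$ (equivalently of $\Mm^\zeta$), and for each product state $s$ put $p(s)=\max_{\sigma\in\mathit{Pos}}a^\sigma_s$; since B\"uchi objectives admit positional optimal strategies, $p(s)$ is the optimal probability of satisfying the B\"uchi condition in $\Mm\!\times\!\atm$ from $s$, and by Theorem~\ref{th:recall} (read from $s$) this coincides with $\PSat^\varphi_*$ from the corresponding state of $\Mm$. Partition $\mathit{Pos}$ into $G_s=\{\sigma:a^\sigma_s=p(s)\}$ (nonempty, as the maximum is attained) and $B_s=\mathit{Pos}\setminus G_s$. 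Let $F=\max_{\sigma\in\mathit{Pos},\,s}f^\sigma_s<\infty$, and for every $s$ with $B_s\neq\emptyset$ let $\delta(s)=p(s)-\max_{\sigma\in B_s}a^\sigma_s>0$; put $\delta=\min_s\delta(s)>0$ (if every $B_s$ is empty, or if $F=0$, every positional strategy already attains $p(s)$ from every $s$ and the theorem is immediate with $\zeta'=0$). Finally choose $\zeta'=\max\{0,\,1-\delta/F\}$, so that $\zeta>\zeta'$ forces $(1-\zeta)F<\delta$.

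The core is then one line. Fix $\zeta'<\zeta<1$, a state $s$, and (by the reduction above) a positional $\sigma$ maximizing $p^\sigma_s(\zeta)$ in $\Mm^\zeta$. If $\sigma\in B_s$ then, using $a^\sigma_s\le p(s)-\delta$,
$p^\sigma_s(\zeta)\le a^\sigma_s+(1-\zeta)F\le\bigl(p(s)-\delta\bigr)+(1-\zeta)F<p(s)$,
while for any $\hat\sigma\in G_s$ we have $p^{\hat\sigma}_s(\zeta)\ge a^{\hat\sigma}_s=p(s)$ by Lemma~\ref{lem:basic_properties}(3), contradicting optimality of $\sigma$; hence $\sigma\in G_s$, i.e.\ $a^\sigma_s=p(s)$, which is precisely the optimal B\"uchi value from $s$---this is conclusion~(1). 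For~(2), viewing $\sigma$ as a strategy on $\Mm\!\times\!\atm$ and using the suitability of the SLDBW as in the proof of Theorem~\ref{th:recall}(1) (every accepting run of the product projects onto a run of $\Mm$ that satisfies $\varphi$, and the single nondeterministic guess of the SLDBW is resolved as a function of the observed history), $\sigma$ induces a strategy $\bar\sigma$ on $\Mm$ with $\PSat^\varphi(s,\bar\sigma)\ge a^\sigma_s=p(s)=\PSat^\varphi_*(s)$, so $\bar\sigma$ is optimal for $\varphi$ from $s$.

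The step I expect to be the real work is the uniformity bookkeeping: one must ensure that a single threshold $\zeta'$ serves all product states at once, which is why collapsing the per-strategy, per-state estimates of Lemmas~\ref{lem:basic_properties} and~\ref{th:reachability-probability} into the finite constants $F$ and $\delta$ is essential. A secondary point deserving care is the passage in~(2) from a product strategy to an induced MDP strategy: this is legitimate exactly because an SLDBW is deterministic apart from a single guess whose value depends only on the history, so the projection genuinely is a strategy on $\Mm$---this is the only place where ``suitable'' in SLDBW is used. (The reduction to positional strategies at the outset is what lets us speak of ``any'' maximizing strategy, since the reachability value of $\Mm^\zeta$ is attained positionally.)
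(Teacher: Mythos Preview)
Your proposal is correct and follows essentially the same approach as the paper: reduce to positional strategies, use the two-sided estimate $a^\sigma_s \le p^\sigma_s(\zeta) \le a^\sigma_s + (1-\zeta)f^\sigma_s$ from Lemmas~\ref{lem:basic_properties}(3) and~\ref{th:reachability-probability}, collapse the per-strategy/per-state gaps into uniform constants $F$ and $\delta$ by finiteness, and derive a contradiction for any non-optimal $\sigma$; part~(2) then follows from Theorem~\ref{th:recall}. If anything, your bookkeeping is slightly more careful than the paper's, which defines $\delta$ with an implicit dependence on $s$ while you explicitly take the minimum over all product states.
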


\begin{proof}
We use the fact that it suffices to study positional strategies and there are only finitely many of them.
Let $\sigma_1$ be an optimal strategy of $\Mm \!\times\!\atm$, 
and let $\sigma_2$ be a strategy
that has the highest likelihood of creating an accepting run among 
all non-optimal memoryless ones.
(If $\sigma_2$ does not exist, then all strategies are equally good, and it does not matter which one is chosen.)
Let $\delta = a^{\sigma_1}_s - a^{\sigma_2}_s$.

Let $f_{\max} = \max_\sigma \max_s f^\sigma_s$ where $\sigma$ ranges over positional strategies only and
$f^\sigma_s$ is defined as in Lemma \ref{th:reachability-probability}.
We claim that it suffices to pick $\zeta'\in ]0,1[$ such that $(1-\zeta')\cdot f_{\max} < \delta$.

Suppose that $\sigma$ is a positional strategy that is optimal in $\Mm^\zeta$ for $\zeta > \zeta'$,
but is not optimal in $\Mm \!\times\!\atm$. We then have
$a^{\sigma}_s \leq p^\sigma_s(\zeta) \leq a^\sigma_s + (1-\zeta)f^\sigma_s < a^\sigma_s + \delta \leq a^{\sigma_1}_s \leq p^{\sigma_1}_s(\zeta)$,
where these inequalities follow, respectively, from: Lemma \ref{lem:basic_properties}(3),
the proof of Lemma \ref{th:reachability-probability},
the definition of $\zeta'$, the assumption that $\sigma$ is not optimal
and the definition of $\delta$, and the last one from Lemma \ref{lem:basic_properties}(3).
This shows that $p^\sigma_s(\zeta) < p^{\sigma_1}_s(\zeta)$, i.e., $\sigma$ is not optimal in $\Mm^\zeta$; a contradiction.
Therefore, any positional strategy that is optimal in $\Mm^\zeta$ for $\zeta > \zeta'$
is also optimal in $\Mm \!\times\!\atm$.


Now, suppose that $\sigma$ is a positional strategy that is optimal in $\Mm \!\times\!\atm$.
Theorem \ref{th:recall}(1) implies then that
the probability of satisfying $\varphi$ in $\Mm$ when starting at $s$ is at least $a^\sigma_s$.
At the same time, if there was a strategy for which 
the probability of satisfying $\varphi$ in $\Mm$ is $ > a^\sigma_s$, 
then Theorem \ref{th:recall}(2) would guarantee the existence of strategy $\sigma'$
for which $a^{\sigma'}_s > a^\sigma_s$; a contradiction with the assumption that $\sigma$ is optimal.
Therefore any positional strategy that is optimal in $\Mm \!\times\!\atm$
induces an optimal strategy in $\Mm$ with objective $\varphi$.
\end{proof}

Note that, due to Lemma \ref{lem:basic_properties}(4), every
$\zeta \in ]0,1[$ works when our objective is to satisfy $\varphi$
with probability $1$.

\section{Experimental Results}
\label{sec:experiment}

We implemented the construction described in the previous sections in
a tool named \toolname \cite{Eliot39}, which reads MDPs described in
the PRISM language \cite{kwiatk11}, and $\omega$-regular automata
written in the HOA format \cite{Babiak15,hoaf}.  \toolname builds the
product $\Mm^\zeta$, provides an interface for RL algorithms akin to
that of \cite{Brockm16} and supports probabilistic model checking.
Our algorithm computes, for each pair $(s,a)$ of state and action, the
maximum probability of satisfying the given objective after choosing
action $a$ from state $s$ by using off-the-shelf, temporal difference
algorithms.  Not all actions with maximum probability are part of
positional optimal strategies---consider a product MDP with one state
and two actions, $a$ and $b$, such that $a$ enables an accepting
self-loop, and $b$ enables a non-accepting one: both state/action
pairs are assigned probability $1$.  Since the probability values
alone do not identify a pure optimal strategy, \toolname computes an
optimal mixed strategy, uniformly choosing all maximum probability
actions from a state.


The MDPs on which we tested our algorithms are listed in
Table~\ref{tab:experiment}.  For each model, the number of states in
the MDP and automaton is given, together with the maximum probability
of satisfaction of the objective, as computed by the RL algorithm and
confirmed by the model checker (which has full access to the MDP).
Models \texttt{twoPairs} and \texttt{riskReward} are from
Examples~\ref{ex:two-pairs} and~\ref{ex:average-reward}, respectively.
Models \texttt{grid5x5} and \texttt{trafficNtk} are from
\cite{Sadigh14}.  The three ``windy'' MDPs are taken from
\cite{Sutton18}.  The ``frozen'' examples are from \cite{OpenaiGym}.
Some $\omega$-regular objectives are simple reachability requirements
(e.g., \texttt{frozenSmall} and \texttt{frozenLarge}).  The objective for
\texttt{anothergrid} is to collect three types of coupons, while
incurring at most one of two types of penalties.  The objective for
\texttt{slalom} is given by the LTL formula
$\always(p \rightarrow \nextt \always \neg q) \wedge \always(q
\rightarrow \nextt \always \neg p)$.


\begin{table}
  \centering
  \begin{tabular}[c]{l|ccccc}
    Name & states & automaton & probability \\\hline
    \texttt{twoPairs} & 4 & 4 & 1 \\
    \texttt{riskReward} & 4 & 2 & 1 \\
    \texttt{deferred} & 41 & 1 & 1 \\
    \texttt{dpenny} & 52 & 2 & 0.5 \\
    \texttt{anothergrid} & 36 & 25 & 1 \\
    \texttt{frozenSmall} & 16 & 3 & 0.823 \\
    \texttt{frozenLarge} & 64 & 3 & 1 \\
    \texttt{grid5x5} & 25 & 3 & 1  \\
    \texttt{windy} & 63 & 1 & 1  \\
    \texttt{windyKing} & 70 & 1 & 1 \\
    \texttt{windyStoch} & 70 & 1 & 1  \\
    \texttt{slalom} & 38 & 5 & 1 \\
    \texttt{devious} & 11 & 1 & 1  \\
    \texttt{mdp2} & 6 & 3 & 1 \\
    \texttt{mdp3} & 6 & 3 & 1 \\
    \texttt{morris} & 4 & 4 & 1 \\
    \texttt{arbiter2} & 32 & 3 & 1 \\
    \texttt{corridor} & 8 & 2 & 1 \\
    \texttt{knuthYao} & 13 & 3 & 1 \\
    \texttt{strat1} & 3 & 4 & 1 \\
    \texttt{trafficNtk} & 122 & 9 & 1
  \end{tabular}
  \caption{Experimental results.}
  \label{tab:experiment}
\end{table}

Figure~\ref{fig:deferred} illustrates how increasing the parameter
$\zeta$ makes the RL algorithm less sensitive to the presence of
transient accepting transitions.  Model \texttt{deferred} consists of
two chains of states: one, which the agent choses with action $a$, has
accepting transitions throughout, but leads to an end component that
is not accepting.  The other chain, selected with action $b$, leads to an
accepting end component, but has no other accepting transitions.  There
are no other decisions in the model; hence only two strategies are
possible, which we denote by $a$ and $b$, depending on the action
chosen.

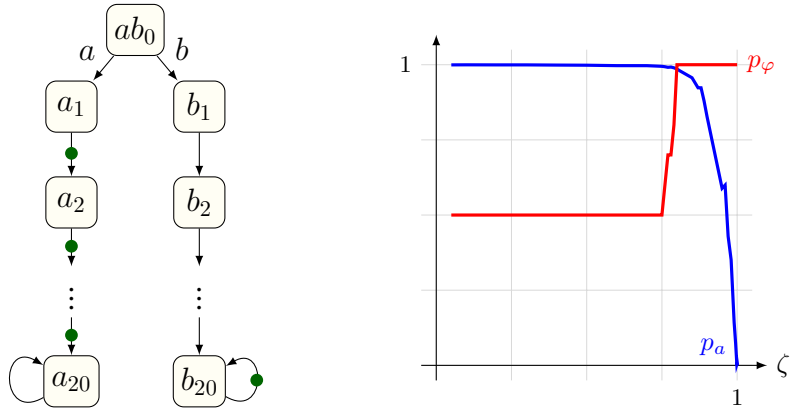
\begin{figure}
  \centering
  \begin{tikzpicture}
    \begin{scope}[scale=0.85,yshift=5.25cm,transform shape,
      every node/.style={font=\Large}]
      \node[box state,fill=safecellcolor] (AB0) {$ab_0$};
      \node[box state,fill=safecellcolor] (A1) [below left=1.2cm and 1cm
      of AB0] {$a_1$};
      \node[box state,fill=safecellcolor] (B1) [below right=1.2cm and
      1cm of AB0] {$b_1$};
      \node[box state,fill=safecellcolor] (A2) [below=1.5cm of A1]
      {$a_2$};
      \node[box state,fill=safecellcolor] (B2) [below=1.5cm of B1]
      {$b_2$};
      \node (DotsA) [below=1.4 of A2] {$\vdots$};
      \node (DotsB) [below=1.4 of B2] {$\vdots$};
      \node[box state,fill=safecellcolor] (A20) [below=1.4cm of DotsA]
      {$a_{20}$};
      \node[box state,fill=safecellcolor] (B20) [below=1.4cm of DotsB]
      {$b_{20}$};
      \path[->]
      (AB0) edge[swap] node {$a$} (A1)
      edge node {$b$} (B1)
      (A1) edge node [accepting dot] {} (A2)
      (B1) edge (B2)
      (A2) edge node [accepting dot] {} (DotsA)
      (B2) edge (DotsB)
      (DotsA) edge node [accepting dot] {} (A20)
      (DotsB) edge (B20)
      (A20) edge[loop left,looseness=4,out=210,in=150] (A20)
      (B20) edge[loop right,looseness=4,out=330,in=30] node [accepting dot] {} (B20);
    \end{scope}
    \begin{scope}[xshift=4cm,scale=4]
      \draw[ultra thin, color=gray!30, step=0.25] (-0.05,-0.05) grid (1.05,1.05);
      \draw[->] (-0.05,0) -- (1.1,0) node[right] {$\zeta$};
      \draw[->] (0,-0.05) -- (0,1.1);
      \draw[blue, very thick] plot coordinates {(0.05,0.999454)
        (0.1,0.999645) (0.2,0.999403) (0.3,0.999431) (0.4,0.998791)
        (0.5,0.998318) (0.6,0.996904) (0.7,0.996797) (0.75,0.99465)
        (0.76,0.99327) (0.77,0.99133) (0.78,0.9922) (0.79,0.98969)
        (0.8,0.9857) (0.81,0.97941) (0.85,0.956589) (0.87,0.923285)
        (0.88,0.92323) (0.89,0.87954) (0.9,0.827172) (0.95,0.588913)
        (0.96,0.599504) (0.97,0.429409) (0.98,0.350915) (0.99,0.142736)
        (0.995,0.072589) (0.999,0.003537) (0.9995,0.005199) (0.9999,0.0)
      } node[above left] {$p_a$};
      \draw[red, very thick] plot coordinates {(0.05,0.5)
        (0.1,0.5) (0.2,0.5) (0.3,0.5) (0.4,0.5)
        (0.5,0.5) (0.6,0.5) (0.7,0.5) (0.75,0.5)
        (0.76,0.6) (0.77,0.7) (0.78,0.7) (0.79,0.8)
        (0.8,1) (0.81,1) (0.85,1) (0.87,1)
        (0.88,1) (0.89,1) (0.9,1) (0.95,1)
        (0.96,1) (0.97,1) (0.98,1) (0.99,1)
        (0.995,1) (0.999,1) (0.9995,1) (0.9999,1)
      } node[right] {$p_\varphi$};
      \node[left] (Y1) at (-0.05,1) {\small$1$}; 
      \node[below] (X1) at (1,-0.05) {\small$1$}; 
    \end{scope}
  \end{tikzpicture}
  \caption{Model \texttt{deferred} and effect of $\zeta$ on it.}
  \label{fig:deferred}
\end{figure}

The curve labeled $p_a$ in Figure~\ref{fig:deferred} gives the
probability of satisfaction under strategy $a$ of the MDP's objective
as a function of $\zeta$ as computed by Q-learning.  The number of
episodes is kept fixed at $20,000$ and each episode has length
$80$.  Each data point is the average of five experiments for the same
value of $\zeta$.

For values of $\zeta$ close to $0$, the chance is high that the sink
is reached directly from a transient state.  Consequently, Q-learning
considers strategies $a$ and $b$ equally good.  For this reason, the
probability of satisfaction of the objective, $p_\varphi$, according
to the strategy that mixes $a$ and $b$, is computed by \toolname's
model checker as $0.5$.  As $\zeta$ approaches $1$, the importance of
transient accepting transitions decreases, until the probability
computed for strategy $a$ is no longer considered to be approximately
the same as the probability of strategy $b$.  When that happens,
$p_\varphi$ abruptly goes from $0.5$ to its true value of $1$, because
the pure strategy $b$ is selected.  The value of $p_a$ continues to
decline for larger values of $\zeta$ until it reaches its true value
of $0$ for $\zeta = 0.9999$.  Probability $p_b$, not shown in the
graph, is $1$ throughout.

The change in value of $p_\varphi$ does not contradict
Theorem~\ref{th:main}, which says that $p_b=1 > p_a$ for all
values of $\zeta$.  In practice a high value of $\zeta$ may be needed
to reliably distinguish between transient and recurrent accepting
transitions in numerical computation.  Besides, Theorem \ref{th:main}
suggests that even in the almost-sure case there is a meaningful path
to the target strategy where the likelihood of satisfying $\varphi$
can be expected to grow.  This is important, as it comes with the
promise of a generally increasing quality of intermediate strategies.


\end{document}